\renewcommand*{\@fnsymbol}[1]{\ensuremath{\ifcase#1\or *\or 1\or 2\or
   3\or 4\or 5\or 6\or 7 \or 8 \else\@ctrerr\fi}}
\numberwithin{equation}{section}
\newtheorem{theorem}{Theorem}[section]
\newtheorem{lemma}[theorem]{Lemma}
\theoremstyle{definition}
\theoremstyle{remark}
\newtheorem{remark}[theorem]{Remark}
\newcommand{\R}{{\mathbb{R}}}
\DeclareMathOperator{\esssup}{{\mathrm{ess\;sup}}}
\begin{document}

\title{Density behavior of spatial birth-and-death stochastic
evolution of mutating genotypes under selection rates
\thanks{The work is partially supported by SFB project 701, RFFI 11-01-00485a, NSF Grant DMS-1008132, and ZiF Research Group ``Stochastic Dynamics: Mathematical
Theory and Applications'', Bielefeld, Germany}}

\author{Dmitri Finkelshtein\thanks{%
        Fakult\"{a}t f\"{u}r Mathematik, Universit\"{a}t Bielefeld,
        Postfach 110 131, 33501 Bielefeld, Germany,
        e-mail: finkelst@math.uni-bielefeld.de} 
        \and Yuri Kondratiev\thanks{%
        Fakult\"at f\"ur Mathematik, Universit\"at Bielefeld, 
        Postfach 110 131, 33501 Bielefeld, Germany,
        e-mail: kondrat@math.uni-bielefeld.de} 
        \and Oleksandr Kutoviy\thanks{%
        Department of Mathematics, MIT,
        77 Massachusetts Avenue 2-155,
        Cambridge, MA, USA,
        e-mail: kutovyi@mit.edu;
        Fakult\"at f\"ur Mathematik,
        Universit\"at Bielefeld,
        Postfach 110 131, 33501 Bielefeld, Germany,
        e-mail: kutoviy@math.uni-bielefeld.de}
        \and Stanislav Molchanov\thanks{%
        Department of Mathematics and Statistics, University of North Carolina at Charlotte, 
        Charlotte, NC 28223, USA,
        e-mail: smolchan@uncc.edu} 
        \and Elena Zhizhina\thanks{%
        Institute for Information Transmission Problems 
        of the Russian Academy of Sciences, Moscow, Russia,
        e-mail: ejj@iitp.ru}%
        }

\date{\today}
\maketitle

\begin{abstract}
We consider birth-and-death stochastic evolution of genotypes with different lengths. The genotypes might mutate that provides a stochastic changing of lengthes by a free diffusion law. The birth and death rates are length dependent which corresponds to a selection effect. We study an asymptotic behavior of a density for an infinite collection of genotypes. The cases of space homogeneous and space heterogeneous densities are considered.

{\bf Keywords:} Birth-and-death evolution, Mutation-selection, Genotypes

{\bf MSC (2010):} 60J80, 60J60, 35J10, 92D99
\end{abstract}

\section{Description of model}

We start with a heuristic discussion of a model, describing spatial
evolution of mutating genotypes under selection rates. Each genotype
might be characterized by a pair $\widehat{x}:=(x,s_x)$. Here $x\in\R^d$
is a location in the Eucledian space occupied by this genotype, and
a mark $s_x$ is its quantitative characteristic. We will consider,
cf. \cite{CK,K}, a continuous-gene space model. Namely, $s_x\in
\R_+:=[0,+\infty)$ will be understood as a length of a genotype
located at~site~$x$.

We describe an infinite collection of genotypes as a configuration
$\widehat{\gamma}:=\{\widehat{x}\}$. Having in mind that in the reality any
individual with a given genotype has not only position in space but
also non-zero size, we assume that $\gamma:=\{x\}$ is a locally
finite subset in $\R^d$. Namely, $\gamma\cap\Lambda$ is a finite set
for any compact $\Lambda\subset\R^d$. Let $\Gamma$ and
$\widehat{\Gamma}$ be the spaces of such $\gamma$'s and
$\widehat{\gamma}$'s, accordingly.

In the present paper, we deal with mutating genotypes. Omitting the
nature of these mutations, we suppose that they lead to a stochastic
evolution of marks $s_x$, given by Brownian motion on $\R_+$ with
absorption at $0$. We consider a birth-and-death stochastic dynamics
of mutating genotypes. It means that at any random moment of time
the existing genotype may disappear (die) from the configuration or
may produce a new one. This new genotype will be placed at other
location in the space. It has the parent's genotype at the moment of
birth, but then it immediately involves in a mutation process. This
may be understood as an expansion of genotypes along the space. The
probabilistic rates of birth and death of a genotype are independent
of the rest of configuration, however, we suppose that they depend
on sizes of genotypes. In fact, it means that we have selection in
rates of birth and death. It is natural for biological systems that
genotypes with very shot as well as very long length have less
possibilities for surviving and reproduction, see e.\,g. \cite{Baake, Burger}.

The heuristic Markov generator of the dynamics described above may be
given by
\begin{align}
(L F)(\widehat{\gamma}) &=  \sum_{x\in \gamma} b(s_x) \int_{R^d} a(x-y)
\bigl(F(\widehat{\gamma} \cup \{ y, s_x \}) - F(\widehat{\gamma})\bigr) dy
\notag
\\&\quad + \sum_{x\in \gamma}  d(s_x) \bigl(F(\widehat{\gamma}
\setminus \{ x, s_x \}) - F(\widehat{\gamma})\bigr)  +  \sum_{x \in
\gamma} \frac{\partial^2}{\partial s_x^2}  F(\widehat{\gamma}).
\label{L}
\end{align}

The first term in \eqref{L} describes the birth of genotypes. This
reproduction step involves selection as well as expansion of
genotypes along the space. The function $a$ describes an expansion
(migration) rate, it is independent on marks $s_x, s_y$. Function
$b$ is associated with stabilizing selection. It prescribes that
some lengths may be ranked against the other lengths. Genotypes with
optimal (or at least more optimal) lengths are assumed to breed and
to spread more intensively. We assume that $0\leq a\in L^1(\R^d)$,
$a$ is an even non-negative function, $b:\R_+\to \R_+$ with
$b(0)=0$. Without loss of generality we suppose that
$\int_{\R^d}a(x)dx=1$.

The second term in \eqref{L} corresponds to the death of genotypes. We assume here
that the death rate $d:\R_+\to \R_+$ depends only on a length of a
genotype, and does not depend on a location of genotype in the
space. The shape of $d$ will be discussed below.

The third term describes mainly mutations of genotypes, but also can
include all random changes within the genotype, such as:
duplication, genetic drift, etc. This differential operator is a
modification of the generator for a random jump mutation model on
the continuous space. Let us note that the third term is the direct
sum of operators. That means that we assume that each offspring
develops independently on others and we do not consider any
interaction between existing genotypes.

Note that models of this type (without expansion), so-called
mutation-selection models, play an important role in analysis of
many problems of population genetics, see e.\,g. \cite{Baake, Burger}.

To give a rigorous meaning to the expression \eqref{L} we consider
the following classes of functions. Let $\mathcal{D}$ consist of all
functions $\varphi:\R^d\times\R_+\to \R$ which have bounded support
in $\R^d\times(0,\infty)$, and $\varphi$ is a continuous functions
in the first variable and twice continuously differentiable in the
second variable. For any $\varphi\in\mathcal{D}$ the following
expression is well-defined:
\[
\langle\varphi,\widehat{\gamma}\rangle:=\sum_{x\in\gamma}\varphi(x,s_x),
\]
since the summation will only be taken over the finite set $\gamma_{\Lambda}:=\gamma\cap\Lambda$
for some compact $\Lambda\subset\R^d$. Let
$\varphi_1,\ldots,\varphi_N\in\mathcal{D}$ and $f:\R^N\to \R$ be
twice continuously differentiable function on $\R^N$ bounded
together with all its partial derivatives. The class of all
functions of the form
$$F(\widehat{\gamma})=f(\langle\varphi_1,\widehat{\gamma}\rangle,\ldots,
\langle\varphi_N,\widehat{\gamma}\rangle),\quad \widehat{\gamma}\in\widehat{\Gamma}$$
we denote by $\mathcal{F}$. It is worth noting that  for any $F\in\mathcal{F}$ the value of $F(\widehat{\gamma})$ does not depend on
on those $\widehat{x}\in\widehat{\gamma}$ which are outside of the union of
supports of $\varphi_1,\ldots,\varphi_N$. In particular, the summation in the second term of
\eqref{L} will only be taken over a finite subset of each $\gamma$, hence this term is well-defined. Analogously, for each $x$
which is outside of the union of supports above,
$\dfrac{\partial^2 F}{\partial s_x^2} (\widehat{\gamma})=0$. Similarly,
the integral in the first term of \eqref{L} will be taken over a
compact set. Moreover, if, additionally, $a$ has compact support in
$\R^d$ the sum before integral will be also finite. For a general
integrable function $a$, this sum is a series which may converges
only a.s. in the following sense.

Let $\mu$ be a probability measure (state) on the space
$\widehat{\Gamma}$ with $\sigma$-algebra described e.\,g. in \cite{Ka}. A
function $k_\mu:\R^d\times\R_+\to \R$ is called a density (or a
first correlation function) of the measure $\mu$ if for any
$\varphi:\R^d\times\R_+\to \R$ such that $\varphi\cdot k_\mu\in
L^1(\R^d\times\R_+)$ we have:  the function $\langle\varphi,\cdot\rangle$
belongs to $L^1(\widehat{\Gamma},\mu)$ and
\[
\int_{\widehat{\Gamma}}\langle\varphi,\widehat{\gamma}\rangle
d\mu(\widehat{\gamma})= \int_{\R^d}\int_{\R_+}\varphi(x,s)
k_\mu(x,s)dxds.
\]
In this case, $\langle\varphi,\widehat{\gamma}\rangle$ is well-defined
for $\mu$-almost all $\widehat{\gamma}\in\widehat{\Gamma}$.

It is obvious, that for $a\in L^1(\R^d)$ and any probability measure
$\mu$ on $\widehat{\Gamma}$ with the bounded density $k_\mu$ the first
term in \eqref{L} is well-defined for $\mu$-almost all
$\widehat{\gamma}\in\widehat{\Gamma}$ and $F\in\mathcal{F}$.

The construction of evolution of states with the generator given by
\eqref{L} is usually related with the construction and properties of
evolution of densities and higher-order correlation functions (see
e.\,g. \cite{KKP} for the case without marks). The aim of the present
paper is to study the evolution of the density only. Therefore, we
suppose that there exists an evolution of measures given by
\begin{equation}\label{FP}
\frac{d}{dt}\int_{\widehat{\Gamma}} F d\mu_t=\int_{\widehat{\Gamma}} LF
d\mu_t, \quad F\in\mathcal{F}
\end{equation}
with initial measure $\mu_0$ at $t=0$. We assume also that $k_t$
be a density of $\mu_t$. Then, for
$F_\varphi(\widehat{\gamma}):=\langle\varphi,\widehat{\gamma}\rangle$,
$\varphi\in\mathcal{D}$ we obtain
\[
(LF_\varphi)(\widehat{\gamma}) = \sum_{x\in \gamma} \int_{\R^d} a(x-y)
b(s_x) \varphi(y,s_x) dy  - \sum_{x\in \gamma}  d(s_x)
\varphi(x,s_x) + \sum_{x \in \gamma} \frac{\partial^2}{\partial
s_x^2} \varphi(x,s_x).
\]
Therefore,
\begin{align}
\int_{\widehat{\Gamma}} (LF_\varphi)(\widehat{\gamma})
d\mu_t(\widehat{\gamma})&= \int_{\R^d} \int_{\R_+} k_t(x,s) \int_{\R^d}
a(x-y) b(s) \varphi(y,s) dy dsdx\notag
\\&\quad - \int_{\R^d} \int_{\R_+} k_t(x,s) d(s) \varphi(x,s) dsdx\notag\\&\quad+
\int_{\R^d} \int_{\R_+} k_t(x,s) \frac{\partial^2}{\partial s^2}
\varphi(x,s)dsdx.\label{eqn1}
\end{align}
On the other hand,
\begin{equation}\label{eqn2}
\frac{d}{dt}\int_{\widehat{\Gamma}} F_\varphi(\widehat{\gamma})
d\mu_t(\widehat{\gamma})=\frac{d}{dt}\int_{\R^d} \int_{\R_+}
k_t(x,s)\varphi(x,s) dsdx.
\end{equation}
Since $\varphi\in\mathcal{D}$ is arbitrary, by \eqref{FP}, \eqref{eqn1}, \eqref{eqn2}, the
densities $k_t$ should satisfy (in a weak sense) the following
differential equation
\begin{equation}\label{CP}
\frac{\partial}{\partial t}
k_t(x,s)=b(s)\int_{\R^d}a(x-y)k_t(y,s)dy-d(s)
k_t(x,s)+\frac{\partial^2}{\partial s^2}k_t(x,s).
\end{equation}
Using the assumption $\int_{\R^d}a(x)dx=1$ we may rewrite \eqref{CP}
as follows
\begin{align}\label{maineqn}
\frac{\partial}{\partial t} k_t(x,s)&=(\mathsf{A}k_t)(x,s)-(\mathsf{H} k_t)(x,s),\\
(\mathsf{A}k_t)(x,s)&:=b(s)\int_{\R^d}a(x-y)\bigl(k_t(y,s)-k_t(x,s)\bigr)dy,\label{A}\\
(\mathsf{H} k_t)(x,s)&:=-\frac{\partial^2}{\partial
s^2}k_t(x,s)+(d(s)-b(s))k_t(x,s).\label{H1}
\end{align}
It is worth noting that appearance of effective potential $v(s) =
d(s) - b(s)$ is inspired by the evolution mechanism of the spatial
microscopic model. The function $v(s)$ has meaning of a fitness
function, see e.\,g. \cite{Burger}. The typical graphs of $b(s)$
and $d(s)$ motivated by the biological applications are given on the Figures~\ref{fig1} and \ref{fig2}, correspondingly.
\begin{figure}[h]
  \centering \includegraphics{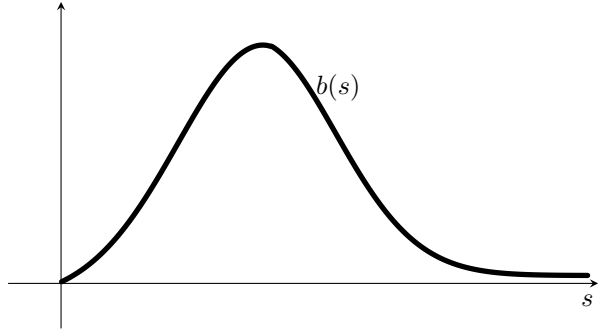}\\
  \centering \caption{Fast decreasing of $b(s)$ if $s\to\infty$}\label{fig1}
\end{figure}
\begin{figure}[h]
  \centering \includegraphics{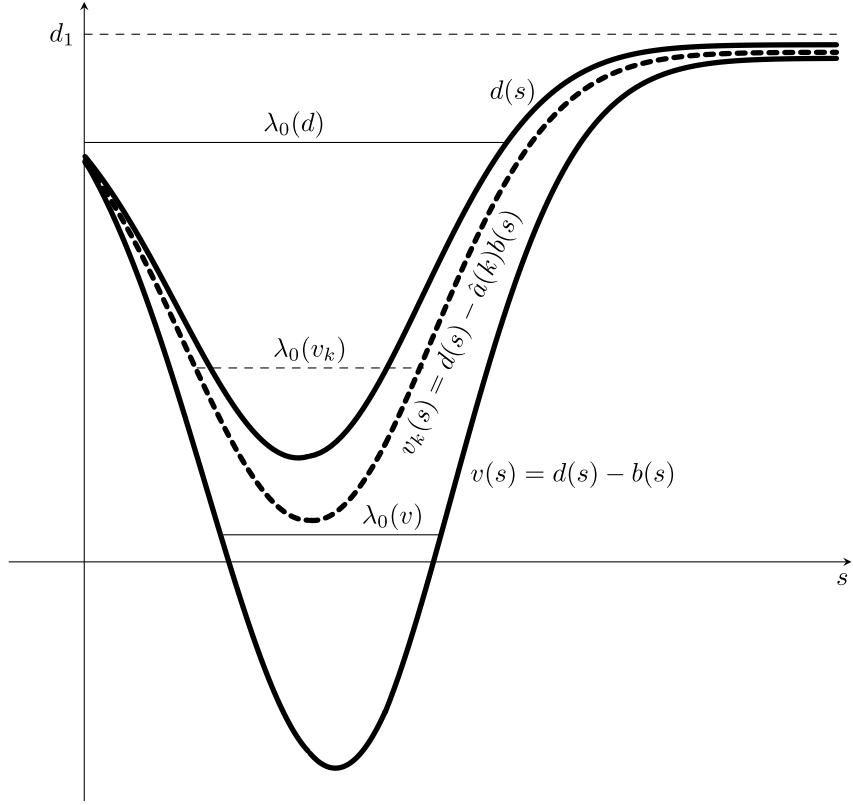}\\
  \centering \caption{If $s \to \infty$ then either $d(s) \to
+\infty$ or $d(s) \to d_1$, $d(s)\leq d_{1}$, $s\in\R_{+}$ and
$d(s_0)< b(s_0)$}\label{fig2}
\end{figure}

In the next sections we will study the classical solution of
\eqref{maineqn}--\eqref{H1} with initial conditions $k_0$ in different
Banach spaces.

\section{Asymptotic of a spatially uniform density }

Let $\mathcal{H}=L^2(\R_+)$ be a real Hilbert space. Let us define the following class of functions
$$
X=\biggl\{\sum_{i=1}^{n}c_{i}\psi_{i}(x)\varphi_{i}(s)\biggm| c_{i}\in\R,\;\psi_{i}\in L^{\infty}(\R^{d}),\;\varphi_{i}\in\mathcal{H}, \;i=1,\ldots,n,\;n\in\mathbb{N}\biggr\}.
$$
By  $\mathcal{X}$ we denote the closure of $X$ with respect to the norm
\[
\|k\|_\mathcal{X}:=\esssup_{x\in\R^d}\|k(x,\cdot)\|_{\mathcal{H}}.
\]
Hence one can naturally embed $\mathcal{H}$ into $\mathcal{X}$ as
set of functions which are constants in $x$. We will use the same
notations for function $f\in\mathcal{H}$ as an element of
$\mathcal{X}$.

Suppose that there exists $\omega\geq0$ such that
\begin{equation}\label{bbb}
v(s):=d(s)-b(s)\geq -\omega, \quad s\in\R_+.
\end{equation}
Let $C_0^\infty(\R_+)$ consist of all smooth functions $f$ on $\R_+$
with bounded support such that $f(0)=0$. Then the operator
\[
(Hf)(s):=-\frac{d^2 f(s)}{d^2 s} + v(s)f(s)
\]
with a domain
$C_0^\infty(\R_+)$ is essentially self-adjoint in $\mathcal{H}$
(see e.\,g. \cite{BSh}). Let $\bigl(\bar{H},
\mathrm{Dom}(\bar{H})\bigr)$ be its self-adjoint closure in
$\mathcal{H}$. Let $\mathsf{D}\subset\mathcal{X}$ consist of all
functions $k\in\mathcal{X}$ such that, for a.a. $x\in\R^d$,
$k(x,\cdot)\in\mathrm{Dom}(\bar{H})$.
\begin{lemma}\label{le}
Let \eqref{bbb} hold and $b\in L^\infty(\R_+)$. Then
$(\mathsf{A}-\mathsf{H},\mathsf{D})$ is a generator of a
$C_0$-semigroup $S(t)$ in $\mathcal{X}$.
\end{lemma}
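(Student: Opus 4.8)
The plan is to split the generator $\mathsf{A}-\mathsf{H}$ into the unbounded ``diagonal'' part $-\mathsf{H}$, which acts only in the length variable $s$, and the operator $\mathsf{A}$, which I will show is bounded on $\mathcal{X}$, and then to conclude by the bounded perturbation theorem. First I would treat $-\mathsf{H}$. By assumption \eqref{bbb} we have $\bar{H}\geq-\omega$ in $\mathcal{H}$, so by the spectral theorem for the self-adjoint, bounded-below operator $\bar{H}$ the operator $-\bar{H}$ generates a $C_0$-semigroup (in fact an analytic one) $e^{-t\bar{H}}$ on $\mathcal{H}$ with $\|e^{-t\bar{H}}\|_{\mathcal{H}}\leq e^{\omega t}$. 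I would then lift this semigroup to $\mathcal{X}$ fiberwise, setting $(T(t)k)(x,\cdot):=e^{-t\bar{H}}k(x,\cdot)$ for a.a. $x$. On the dense subset $X$ this is transparent: $T(t)$ sends a product $\psi(x)\varphi(s)$ to $\psi(x)(e^{-t\bar{H}}\varphi)(s)$, which again lies in $X$, and the fiberwise bound together with the definition of $\|\cdot\|_\mathcal{X}$ gives $\|T(t)k\|_\mathcal{X}\leq e^{\omega t}\|k\|_\mathcal{X}$, so $T(t)$ extends to a bounded operator on $\mathcal{X}$. The semigroup law $T(t)T(r)=T(t+r)$ and $T(0)=I$ hold fiberwise.

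The step I expect to be the main obstacle is the transfer of strong continuity and of the domain description from $\mathcal{H}$ to the ess-sup-in-$x$ norm of $\mathcal{X}$, since everything known about $e^{-t\bar{H}}$ is controlled in the $\mathcal{H}$-norm only. For strong continuity I would argue on the dense set $X$: for a product $\psi\varphi$ one has $\|T(t)(\psi\varphi)-\psi\varphi\|_\mathcal{X}=\|\psi\|_\infty\,\|e^{-t\bar{H}}\varphi-\varphi\|_\mathcal{H}\to0$, and the uniform bound $\|T(t)\|\leq e^{\omega t}$ together with a $3\varepsilon$-argument extends this to all of $\mathcal{X}$; hence $T(t)$ is a $C_0$-semigroup. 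To identify its generator with $(-\mathsf{H},\mathsf{D})$ I would pass to resolvents: for $\lambda>\omega$ the Laplace transform $\int_0^\infty e^{-\lambda t}T(t)\,dt$ can be computed fiberwise and equals the operator $k\mapsto(\lambda+\bar{H})^{-1}k(x,\cdot)$, whose range is exactly $\{k\in\mathcal{X}:k(x,\cdot)\in\mathrm{Dom}(\bar{H})\ \text{for a.a.}\ x\}=\mathsf{D}$. This simultaneously shows that $\mathsf{D}$ is dense and yields that the generator of $T(t)$ is precisely $-\mathsf{H}$ with domain $\mathsf{D}$. The measurability in $x$ and the interchange of integration underlying this Laplace-transform computation are the points requiring care.

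It remains to treat $\mathsf{A}$ and to conclude. Writing $(Ck)(x,s):=\int_{\R^d}a(x-y)k(y,s)\,dy$, Minkowski's integral inequality together with $a\geq0$ and $\int_{\R^d}a=1$ gives, for a.a. $x$,
\[
\|(Ck)(x,\cdot)\|_\mathcal{H}\leq\int_{\R^d}a(x-y)\,\|k(y,\cdot)\|_\mathcal{H}\,dy\leq\|k\|_\mathcal{X}.
\]
Since $\mathsf{A}k=b\,(Ck-k)$ and, by $b\in L^\infty(\R_+)$, multiplication by $b$ acts boundedly in $s$, this yields $\|\mathsf{A}k\|_\mathcal{X}\leq2\|b\|_\infty\|k\|_\mathcal{X}$, so $\mathsf{A}$ is a bounded operator on $\mathcal{X}$ (mapping $X$ into $X$). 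Finally, since $(-\mathsf{H},\mathsf{D})$ generates a $C_0$-semigroup and $\mathsf{A}$ is bounded, the bounded perturbation theorem shows that $(\mathsf{A}-\mathsf{H},\mathsf{D})=(-\mathsf{H}+\mathsf{A},\mathsf{D})$ generates a $C_0$-semigroup $S(t)$ on $\mathcal{X}$, completing the proof.
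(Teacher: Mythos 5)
Your proof is correct and follows the same overall strategy as the paper: split $\mathsf{A}-\mathsf{H}$ into the fiberwise-acting unbounded part $-\mathsf{H}$ and the bounded operator $\mathsf{A}$, and conclude with the bounded perturbation theorem; your bound $\|\mathsf{A}\|\leq 2\|b\|_{L^\infty(\R_+)}$ also matches the paper's (you use Minkowski's integral inequality where the paper uses Jensen/Cauchy--Schwarz together with $\int_{\R^d}a=1$, an immaterial difference). The one step you handle by a genuinely different route is the generation of a $C_0$-semigroup by $(-\mathsf{H},\mathsf{D})$ in $\mathcal{X}$. The paper transfers the \emph{resolvent} fiberwise: it asserts $\rho(-\mathsf{H})=\rho(-\bar{H})$, that $R(\lambda,-\mathsf{H})$ acts as $R(\lambda,-\bar{H})$ on each fiber $k(x,\cdot)$, deduces $\|R(\lambda,-\mathsf{H})\|\leq(\lambda-\omega)^{-1}$ for $\lambda>\omega$, and invokes the Hille--Yosida theorem in its quasi-contractive form. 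You instead transfer the \emph{semigroup} fiberwise, verify strong continuity on the dense set $X$ by a $3\varepsilon$-argument, and identify the generator a posteriori via the Laplace transform. Both routes work; yours has the mild advantage of not needing to assert up front that $(-\mathsf{H},\mathsf{D})$ is closed and densely defined (the paper calls this ``evident''), since these facts come for free once the semigroup is constructed, while the paper's route avoids checking the semigroup law and strong continuity in $\mathcal{X}$ by hand. One shared subtlety: as literally defined ($k(x,\cdot)\in\mathrm{Dom}(\bar{H})$ for a.a.\ $x$), membership in $\mathsf{D}$ does not by itself guarantee $\esssup_{x}\|\bar{H}k(x,\cdot)\|_{\mathcal{H}}<\infty$, i.e.\ that $\mathsf{H}k\in\mathcal{X}$; what your Laplace-transform computation actually identifies as the generator's domain is the range of the lifted resolvent, namely $\{k\in\mathsf{D}:\mathsf{H}k\in\mathcal{X}\}$. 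Since the paper glosses over exactly the same point, this is not a defect of your argument relative to the intended one.
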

\begin{proof}
Since $v$ is bounded from below, $(-\bar{H}f,f)_\mathcal{H}\leq
\omega\|f\|^2_\mathcal{H}$ for any
$f\in\mathrm{Dom}(\bar{H})$. Therefore, by e.\,g. \cite[Example
II.3.27]{EN}, $\bigl(-\bar{H}, \mathrm{Dom}(\bar{H})\bigr)$ is a
generator of a $C_0$-semigroup $T_{\bar{H}}(t)$ in $\mathcal{H}$,
and moreover, $\|T_{\bar{H}}(t)\|\leq e^{t\omega}$, $t\geq0$. Then,
by a version of Hille--Yosida theorem (see e.\,g. \cite[Corollary
II.3.6]{EN}), for each $\lambda > \omega$,
$\lambda\in\rho(-\bar{H})$ and $\|R(\lambda, -\bar{H})\|\leq
(\lambda-\omega)^{-1}$. Here and below $\rho(B)$ and $R(\lambda, B)$
denotes a resolvent set and a resolvent of a closed operator $B$,
correspondingly. By \eqref{H1} and the properties of $\bar{H}$, it
is evident that $(-\mathsf{H}, \mathsf{D})$ is a closed densely
defined operator in $\mathcal{X}$. Moreover, $\rho(-\mathsf{H})
=\rho(-\bar{H})$, and, for each $\lambda\in\rho(-\mathsf{H})$,
\[
\bigl(R(\lambda, -\bar{H})k(x,\cdot)\bigr)(s)=\bigl(R(\lambda,
-\mathsf{H})k\bigr)(x,s), \quad k\in\mathcal{X}, x\in\R^d, s\in\R_+.
\]
As a result,
\begin{align*}
\|R(\lambda,
-\mathsf{H})k\|_\mathcal{X}&=\esssup_{x\in\R^d}\bigl\|\bigl(R(\lambda,
-\mathsf{H})k\bigr)(x,\cdot)\bigr\|_\mathcal{H}\\&=\esssup_{x\in\R^d}\bigl\|R(\lambda,
-\bar{H})k(x,\cdot)\bigr\|_\mathcal{H}\\&\leq
(\lambda-\omega)^{-1}\esssup_{x\in\R^d}\|k(x,\cdot)\|_\mathcal{H}=(\lambda-\omega)^{-1}\|k\|_\mathcal{X}.
\end{align*}
Hence, by the version of Hille--Yosida theorem mentioned above,
$(-\mathsf{H}, \mathsf{D})$ is a generator of a $C_0$-semigroup
$T_{\mathsf{H}}(t)$ in the space $\mathcal{X}$, and moreover,
$\|T_{\mathsf{H}}(t)\|\leq e^{t\omega}$, $t\geq0$. Next, since $b\in L^\infty(\R_+)$, we have, for any
$k\in\mathcal{X}$ and for a.a. $x\in\R^d$,
\begin{align*}
&\|(\mathsf{A}k)(x,\cdot)\|_\mathcal{H}\\&\leq
\|b\|_{L^\infty(\R_+)}\biggl( \int_{\R_+} \biggl( \int_{\R^d}
a(x-y)\bigl(k(y,s)-k(x,s)\bigr)dy\biggr)^2
ds\biggr)^{\frac{1}{2}}\\&\leq \|b\|_{L^\infty(\R_+)}\biggl(
\int_{\R_+} \int_{\R^d} a(x-y)\bigl|k(y,s)-k(x,s)\bigr|^2dy
ds\biggr)^{\frac{1}{2}}\\&\leq
\sqrt{2}\,\|b\|_{L^\infty(\R_+)}\biggl( \int_{\R_+} \int_{\R^d}
a(x-y)|k(y,s)|^2 dy ds+\int_{\R_+}|k(x,s)|^2 ds\biggr)^{\frac{1}{2}}
\\&\leq 2 \|b\|_{L^\infty(\R_+)}\Bigl(
\esssup_{x\in\R^d}\|k(x,\cdot)\|_\mathcal{H}^2\Bigr)^{\frac{1}{2}}\leq
2 \|b\|_{L^\infty(\R_+)}\|k\|_\mathcal{X}.
\end{align*}
Therefore, $\mathsf{A}$ is a bounded operator in $\mathcal{X}$ with
$\|\mathsf{A}\|\leq 2 \|b\|_{L^\infty(\R_+)}$. Then, by e.\,g.
\cite[Theorem III.1.3]{EN}, the operator $-\mathsf{H}+\mathsf{A}$
with domain $\mathsf{D}$ generates a $C_0$-semigroup $S(t)$ in
$\mathcal{X}$. Moreover,
\[
\|S(t)\|\leq \exp\{(\omega+2 \|b\|_{L^\infty(\R_+)})t\}, \quad
t\geq0.
\]
\end{proof}

Our goal is to study the asymptotic behavior of $k_t
(x,s)=S(t)k_0(x,s)$ as $t\to\infty$. Here $k_0\in\mathcal{X}$. In
the particular case $k_0(s)\in\mathcal{H}$, one can solve this
problem in details.

\begin{theorem}
Let \eqref{bbb} hold and $b\in L^\infty(\R_+), \; b(s) \ge 0$.
Suppose additionally that the operator $\bar{H}$ in $\mathcal{H}$
has either simple discrete spectrum $\lambda_0< \lambda_1<\ldots < \lambda_n \to \infty $, as $n \to \infty$, or continuous spectrum
$[\lambda,+\infty)$ and a finite number of simple eigenvalues
$\lambda_0<\lambda_1<\ldots \; <\lambda_n<\lambda$. Consider the
initial condition given by $k_0 (x,s) = \varrho_0 (s)$, for a.a.
$x\in\R^d$, $s\in\R_+$, where $\varrho_0\in \mathcal{H}$. Then
\begin{equation}\label{asympt}
\bigl\|S(t)k_0(s) - e^{-t\lambda_0} c_0 \psi_0(s) \bigr\|_\mathcal{X} = O
\bigl(e^{-t\lambda_1}\bigr), \quad t\to \infty,
\end{equation}
where $\psi_0(s)$ is the eigenfunction of the operator $\bar{H}$
corresponding to the eigenvalue $\lambda_0$, $c_0 =(\varrho_0,
\psi_0)_\mathcal{H}$, and $\lambda_1 > \lambda_0$.
\end{theorem}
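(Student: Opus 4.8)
The plan is to exploit the special structure of the initial datum: since $k_0(x,s)=\varrho_0(s)$ does not depend on $x$, the birth operator $\mathsf{A}$ annihilates it. Indeed, for any $k\in\mathcal{X}$ that is constant in $x$ one has $k(y,s)-k(x,s)=0$, so $\mathsf{A}k=0$ by \eqref{A}, while by \eqref{H1} the operator $\mathsf{H}$ acts only in the $s$-variable and maps $x$-constant functions to $x$-constant functions, namely $(\mathsf{H}k)(x,\cdot)=\bar{H}k(x,\cdot)$. Hence the closed subspace $\mathcal{H}\hookrightarrow\mathcal{X}$ of functions constant in $x$ is invariant for the generator $\mathsf{A}-\mathsf{H}$, on which it restricts to $-\bar{H}$. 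On this subspace the $\mathcal{X}$-norm coincides with the $\mathcal{H}$-norm, so I expect $S(t)$ to leave $\mathcal{H}$ invariant and to coincide there with the $C_0$-semigroup $T_{\bar{H}}(t)=e^{-t\bar{H}}$ generated by $-\bar{H}$ on $\mathcal{H}$.

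To make this rigorous I would take $\varrho_0\in\mathrm{Dom}(\bar{H})$ and set $u(t):=T_{\bar{H}}(t)\varrho_0\in\mathcal{H}\subset\mathcal{X}$. Since $u(t)$ is constant in $x$, the computation above shows $\frac{d}{dt}u(t)=-\bar{H}u(t)=(\mathsf{A}-\mathsf{H})u(t)$, with the derivative taken in $\mathcal{H}$ and hence in $\mathcal{X}$ because the two norms agree on $\mathcal{H}$. Thus $u$ solves the abstract Cauchy problem for the generator $\mathsf{A}-\mathsf{H}$ in $\mathcal{X}$, and by uniqueness $u(t)=S(t)\varrho_0$. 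Extending by density of $\mathrm{Dom}(\bar{H})$ and boundedness of the two semigroups gives $S(t)k_0=e^{-t\bar{H}}\varrho_0$ for every $\varrho_0\in\mathcal{H}$, reducing the whole problem to the spectral analysis of $e^{-t\bar{H}}$ in $\mathcal{H}$.

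For the asymptotics, let $P_0$ be the orthogonal projection onto the simple ground-state eigenspace $\R\psi_0$. Since $\bar{H}$ is self-adjoint and $\lambda_0$ is the bottom of its spectrum, the spectral theorem yields $e^{-t\bar{H}}P_0=e^{-t\lambda_0}P_0$ and $P_0\varrho_0=c_0\psi_0$ with $c_0=(\varrho_0,\psi_0)_\mathcal{H}$. Writing the remainder as
\[
S(t)k_0-e^{-t\lambda_0}c_0\psi_0=e^{-t\bar{H}}(I-P_0)\varrho_0,
\]
I would estimate it by the functional calculus: on the range of $I-P_0$ the spectrum of $\bar{H}$ is contained in $[\lambda_1,\infty)$ in both the purely discrete case and the continuous-spectrum case, where $\lambda_1<\dots<\lambda_n<\lambda$ so that the infimum of the remaining spectrum is still $\lambda_1$. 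Hence $\|e^{-t\bar{H}}(I-P_0)\|_{\mathcal{H}\to\mathcal{H}}\leq e^{-t\lambda_1}$, giving
\[
\bigl\|S(t)k_0-e^{-t\lambda_0}c_0\psi_0\bigr\|_\mathcal{X}=\bigl\|e^{-t\bar{H}}(I-P_0)\varrho_0\bigr\|_\mathcal{H}\leq e^{-t\lambda_1}\|\varrho_0\|_\mathcal{H},
\]
which is exactly the claimed $O(e^{-t\lambda_1})$ bound.

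The bulk of the argument is routine spectral theory; the one step requiring care is the reduction in the first two paragraphs, i.e. the rigorous justification that the embedded copy of $\mathcal{H}$ is $S(t)$-invariant and that $S(t)$ restricts there to $e^{-t\bar{H}}$. Everything hinges on the identity $\mathsf{A}k=0$ for $x$-constant $k$ together with uniqueness for the Cauchy problem; once this is in place, the exponential rate $e^{-t\lambda_1}$ comes directly from the spectral gap $\lambda_1-\lambda_0$ above the ground state, uniformly across the discrete and continuous cases.
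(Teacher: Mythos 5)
Your proof is correct and rests on the same central observation as the paper's: $\mathsf{A}$ annihilates functions that are constant in $x$, the embedded copy of $\mathcal{H}$ is isometrically and invariantly sitting inside $\mathcal{X}$, and on it the dynamics reduces to $T_{\bar H}(t)=e^{-t\bar H}$, after which the estimate is pure spectral calculus with the gap above $\lambda_0$ being $\lambda_1$ in both the discrete and the mixed case. The only genuine divergence is in how the identity $S(t)k_0=T_{\bar H}(t)\varrho_0$ is justified: the paper invokes the Trotter product formula $S(t)k_0=\lim_{n\to\infty}\bigl(T_{\mathsf H}(t/n)e^{(t/n)\mathsf{A}}\bigr)^n k_0$ and observes that each factor preserves the $x$-constant subspace (using $e^{t\mathsf{A}}f=f$ there), whereas you first take $\varrho_0\in\mathrm{Dom}(\bar H)$, check that $u(t)=T_{\bar H}(t)\varrho_0$ is a classical solution of the abstract Cauchy problem for $(\mathsf{A}-\mathsf{H},\mathsf{D})$ in $\mathcal{X}$, conclude $u(t)=S(t)\varrho_0$ by uniqueness, and then extend by density using boundedness of both semigroups. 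Both routes are sound; yours is somewhat more self-contained (no product formula needed) at the cost of the extra domain-plus-density step, while the paper's Trotter argument works directly for arbitrary $\varrho_0\in\mathcal{H}$. The spectral part of your argument, writing the remainder as $e^{-t\bar H}(I-P_0)\varrho_0$ and bounding its norm by $e^{-t\lambda_1}\|\varrho_0\|_{\mathcal H}$, is exactly the paper's estimate with $P_{\mathcal{H}'}=I-P_0$.
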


\begin{proof}
By the proof of Lemma~\ref{le}, the operator $(-\mathsf{H},
\mathsf{D})$ is a generator of a $C_0$-semigroup $T_{\mathsf{H}}(t)$
in $\mathcal{X}$ and $\mathsf{A}$ is a bounded operator in
$\mathcal{X}$. Then, by the Trotter formula (see e.\,g. \cite[Exersise
III.5.11]{EN}), we have
\[
S(t)k_0=\lim_{n\to\infty}\Bigl(T_\mathsf{H}\Bigl(\frac{t}{n}\Bigr)e^{\frac{t}{n}\mathsf{A}}\Bigr)^nk_0,
\]
where the limit is considered in the sense of norm in $\mathcal{X}$.
Note that for any $f\in\mathcal{H}\subset\mathcal{X}$,
$\mathsf{A}f=0$, therefore, $e^{t\mathsf{A}}f=f$ for all $t>0$.
Since $k_0$ does not depend on $x$, we have that
$T_\mathsf{H}\bigl(\frac{t}{n}\bigr)e^{\frac{t}{n}\mathsf{A}}k_0=T_\mathsf{H}\bigl(\frac{t}{n}\bigr)k_0$,
and the latter function does not depend on $x$ also. As a result,
\[
 S(t)k_0=T_\mathsf{H}(t)k_0=T_{\bar{H}}(t)\varrho_0.
\]
Therefore, it is enough to show that
 \[
 \|T_{\bar{H}}(t)\varrho_0- e^{-t\lambda_0}c_0 \psi_0\|_\mathcal{H}
 =O\bigl(e^{-t\lambda_1}\bigr), \quad \lambda_1 > \lambda_0, \quad t\to\infty.
 \]
The latter asymptotic follows from the general spectral theory of
self-adjoint operators, see e.\,g. \cite{RSN}. Using spectral
decomposition of the self-adjoint operator $\bar{H}$ in the Hilbert
space $\mathcal{H}$, we get
$$
T_{\bar{H}} (t) \varrho_0 =  \int_{\sigma{(\bar{H})}} e^{-tu}
dE_{\bar{H}} (u) \varrho_0,
$$
where $E_{\bar{H}}$ is the spectral measure of $\bar{H}$ and the
integral is taken over the spectrum of $\bar{H}$. Then,
$$
 \|T_{\bar{H}}(t)\varrho_0- e^{-t\lambda_0}c_0 \psi_0\|^2_\mathcal{H}
\leq e^{-2t \lambda_1} \| P_{{\cal H}'} \varrho_0 \|^2_\mathcal{H} =
O(e^{-2t \lambda_1}),
$$
where $P_{{\cal H}'}$ is the projection on $\mathcal{H}' :=
\mathcal{H} \ominus \{\psi_0\}$. (Note that $\lambda_1$ may be equal
to $\lambda$.) The statement is proved.
\end{proof}

%\begin{remark}
%Asymptotic formula \eqref{asympt} means, in particular, that
%starting with any function of the form $k_0(s, x)= \varrho (s)$,
%$x\in\R^d, \varrho \in \mathcal{H}$, we get eventually for
%$k_t(s,x)$ a shape of the first eigenfunction $\psi_0(s)$ of
%operator $H$ uniformly over the space $x\in\R^d$.
%\end{remark}

\begin{remark}
Asymptotic formula \eqref{asympt} describes, in
particular,
an asymptotical shape of the density. Assume that the
initial
density has the form $k_0(s, x)= \varrho (s)$,
$x\in\R^d, \varrho \in \mathcal{H}$,
which is uniform over the space $\R^d$ but dependent on
mark $s$,
where the dependence is defined by an arbitrary function
$\varrho (s) \in \mathcal{H}$. Then on a large-time scale
(when
$t$ is large enough) we get a density $k_t(s,x)$, which is
again
uniform over the space: $k_t(s, x)= \varrho_t (s)$, and
function $\varrho_t (s)$ specifying the mark dependence in
the
density $k_t (s, x)$ has now a definite shape. It is
shaped like
the first eigenfunction $\psi_0(s)$ of the operator $H$.
That means that an optimal range of mark values appears
under
the long-time evolution.
\end{remark}

\begin{remark}\label{rem2}
Consider the basic Schr\"odinger operator in $L^2(\R_+)$ with
absorption boundary condition:
\begin{equation}\label{BSO}
H f = - \frac{d^2 f}{d s^2} + v\cdot f, \quad v \ge -\omega, \quad
f(0) =0. % \quad \psi\in C_0^\infty(\R_+).
\end{equation}
Then the behavior of the populations in whole depends on the sign of
the minimal eigenvalue $\lambda_0$ of the operator $H$: if
$\lambda_0>0$, then populations are vanishing, if $\lambda_0 <0$,
then populations are increasing. The case $\lambda_0 =0$
(``equilibrium'' regime) is of particular interest. As follows from
the well-known facts on spectrum of one-dimensional Schr\"{o}dinger
operator, see e.\,g. \cite{BSh}, the sign of $\lambda_0$ depends on
the shape of the function $v(s)= d(s) - b(s)$. Let us distinguish
two interesting cases.
\begin{enumerate}
  \item Let $v(s)\to +\infty$, $s \to +\infty$, that means that $d(s)\to+\infty$;
in this case the spectrum of $\bar{H}$ is discrete and simple;
moreover, if $v(s) \ge 0$ then $\lambda_0 >0$;
  \item in the case $d(s)\leq d_{1}$, $s\in\R_{+}$ and $d(s) \to d_1 >0, \; b(s) \to 0$,
$s\to\infty$ the spectrum has a continuous component $[d_1, +
\infty)$ and possibly a discrete set of simple eigenvalues which are
smaller than $d_1$. A simple sufficient condition for the existence
of the ground state in this case has the following form: one can
insert rectangle with the sites $h,l>0$ such that $\sqrt{h}l>\pi/2$
between level $d_{1}$ and graph of $v(s)$ (see Figure 3).
\begin{figure}[ht]
  \centering \includegraphics{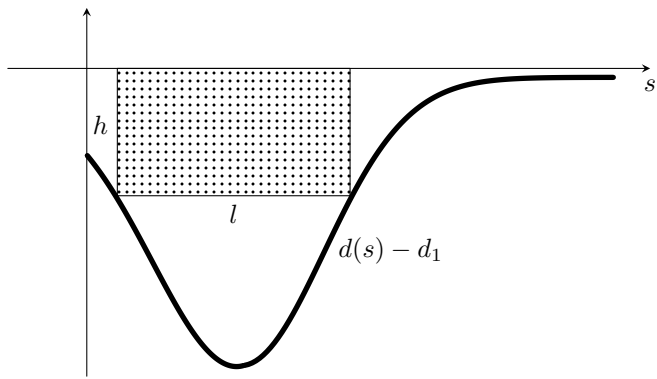}\\
  \caption{Sufficient condition for the existence
of a ground state when $d(s)\leq d_{1}$, $d(s)\to d_1$, $b(s)\to0$ as $s\to\infty$}\label{fig3}
\end{figure}

%The latter claim is due to the fact that for 1-D Schr\"odinger operator
%with (NBC) any potential contains at least one eigenvalue, see,
%e.\,g., [Reed, Simon, vol.4]. In both situations the operator $H$ has
%the lowest (isolated) eigenvalue $\lambda_0$ with the positive
%exponentially decreasing eigenfunction $\psi_0, \ (\psi_0,
%\psi_0)=1 $ (ground state).
\end{enumerate}
\end{remark}

\begin{remark}
The behavior of the populations in whole depends on the sign of the
minimal eigenvalue $\lambda_0$ of the operator $H$: if
$\lambda_0>0$, then populations are vanishing, if $\lambda_0 <0$,
then populations are increasing in the following sense. Denote by
$n(D) = |\widehat \gamma \cap D|$ a random variable equal to the number
of particles from the configuration lying inside of the finite
volume $ D = D_x \times D_s, \ D_x \subset \R^{d}, \ D_s \subset
\R_{+}$. Then
$$
n(D) = \sum_{\widehat x \in \widehat \gamma} \chi_D (\widehat x) = \langle \chi_D, \widehat \gamma \rangle,
$$
where $\chi_D$ is the characteristic function of $D$. The asymptotic
formula (\ref{asympt}) immediately implies the following asymptotic
for the average of $n(D)$ as $t \to \infty$:
\begin{align*}
\mathbb{E}_t  n(D)  &=  \int_{\widehat \Gamma} \langle \chi_D, \ \widehat
\gamma \rangle \ d \mu_t (\widehat \gamma)  =  \int_{\R^d} \int_{\R_+}
\chi_D (x,s) k_t (x,s) dx  ds \\
&= \  e^{-t \lambda_0} c_0 |D_x| \int_{D_s} \psi_0 (s) ds \ (1  +
o(1) )
\end{align*}
\end{remark}

\section{Asymptotic of a spatially local density}

In this section we consider the long time behavior of the density
$k_t(s,x)$ assuming that $k_t (s,x) \in L^2 (\R_+, \R^d)$. Thus, we
consider the Cauchy problem:
\begin{align}\label{PP}
\frac{\partial \psi}{\partial t} &= {\cal L} \psi, \quad \psi =
\psi_t(s,x), \quad s\in \R_+,\; x \in \R^d, \; t \ge 0,\\
({\cal L} \psi)(s,x) &= \frac{\partial^2 \psi(s,x)}{\partial s^2} -
v(s)\psi(s,x)  b(s) \int_{\R^d} a(x-y)(\psi(s,y)- \psi(s,x)) dy,\notag\\
\psi_0(s,x) &\ge 0, \quad \psi_0 \in C_0 (\R_+, \R^d),\notag\\ \psi_t(0,x)
&= 0 \; (\mbox{absorption boundary condition (ABC)}).\notag
\end{align}
Here  $a(x-y)$ and $v(s) = d(s) - b(s)$ are the same functions as in
the preceding section.

\begin{theorem}
Let us assume that operator $H$ (\ref{BSO}) has ground state
$\lambda_0$, and additionally that
$$
a(z)\sim\frac{c_{0}}{|z|^{d+\alpha}}, \quad 0<\alpha\leq  2, \quad
\mbox{as} \quad z \to \infty.
$$
Then the solution of the parabolic problem (\ref{PP}) has the
following asymptotic as $t \to \infty$:
\begin{equation}\label{asymptPP}
\psi_t (s,x) = \frac{e^{-t\lambda_0}}{t^{d/\alpha}} C_d(\alpha)
\ (\psi_0 (s), \widehat u_0 (s,0)) \ \psi_0(s) \  p_\alpha\Bigl(
\frac{x}{(\tilde c_0 t)^{1/\alpha}} \Bigr) (1+o(1)),
\end{equation}
where $\psi_0 (s)$ is the eigenfunction of the operator $H$
corresponding to the eigenvalue $\lambda_0$, $C_d(\alpha), \tilde
c_0$ are positive constants depending on functions $b(s), a(x-y)$;
$\; p_\alpha(\cdot)$ is a density of a d-dimensional symmetric
$\alpha$-stable distribution.
\end{theorem}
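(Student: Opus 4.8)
The plan is to diagonalize the spatial part of $\mathcal L$ by the Fourier transform in $x$ and thereby reduce the problem to the long-time analysis of a one-parameter family of Schr\"odinger operators on $\mathcal H=L^2(\R_+)$. Writing $\widehat\psi(s,k,t)=\int_{\R^d}e^{-ik\cdot x}\psi_t(s,x)\,dx$ and using $\int_{\R^d}a=1$ together with the evenness of $a$, the convolution operator in \eqref{PP} becomes multiplication by $b(s)\bigl(\widehat a(k)-1\bigr)$, so that \eqref{PP} turns into $\partial_t\widehat\psi(\cdot,k,t)=-H_k\widehat\psi(\cdot,k,t)$ with $H_k:=H+b(s)\,\eta(k)$ and $\eta(k):=1-\widehat a(k)\ge 0$, where $H$ is the operator \eqref{BSO}. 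Since $0\le b\in L^\infty(\R_+)$ and $0\le\eta(k)\le 2$, each $H_k$ is a bounded nonnegative perturbation of the self-adjoint $H$, hence self-adjoint and bounded below, and $\widehat\psi(\cdot,k,t)=e^{-tH_k}\widehat u_0(\cdot,k)$, where $\widehat u_0(\cdot,k)$ is the $x$-Fourier transform of the initial datum and $\widehat u_0(s,0)=\int_{\R^d}u_0(s,x)\,dx$.

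Next I would extract the leading small-$k$ behaviour. The decay $a(z)\sim c_0|z|^{-d-\alpha}$ is precisely the tail producing an $\alpha$-stable characteristic exponent, and a standard Fourier/Tauberian computation gives $\eta(k)=1-\widehat a(k)\sim\tilde c_0|k|^\alpha$ as $k\to0$ for $0<\alpha<2$. Because $H$ has a simple isolated ground state $\lambda_0$ with normalized eigenfunction $\psi_0$ and a spectral gap $\lambda_1-\lambda_0>0$, analytic perturbation theory in the scalar parameter $\eta$ yields a simple ground state $\lambda_0(k)$ of $H_k$ with eigenfunction $\psi_0^{(k)}$, both continuous in $k$ near $0$, and
\[
\lambda_0(k)=\lambda_0+\eta(k)\,(\psi_0,b\psi_0)_{\mathcal H}+O\bigl(\eta(k)^2\bigr)=\lambda_0+\beta|k|^\alpha\bigl(1+o(1)\bigr),\qquad \beta:=\tilde c_0\,(\psi_0,b\psi_0)_{\mathcal H}>0,
\]
with $\psi_0^{(k)}\to\psi_0$ in $\mathcal H$. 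The gap persists for small $k$ by stability of isolated eigenvalues, so $e^{-tH_k}=e^{-t\lambda_0(k)}P_0^{(k)}+R_k(t)$, where $P_0^{(k)}=\psi_0^{(k)}(\psi_0^{(k)},\cdot)_{\mathcal H}$ and $\|R_k(t)\|\le e^{-t\lambda_1(k)}$ with $\lambda_1(k)-\lambda_0(k)\ge\delta>0$ uniformly near $k=0$.

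The last step is the scaling (local-limit) argument. Inverting the Fourier transform,
\[
\psi_t(s,x)=\frac{1}{(2\pi)^d}\int_{\R^d}e^{ik\cdot x}\bigl(e^{-tH_k}\widehat u_0(\cdot,k)\bigr)(s)\,dk,
\]
I would split the integral at $|k|=\varepsilon$. On $|k|\ge\varepsilon$, non-latticeness of $a$ gives $\inf_{|k|\ge\varepsilon}\eta(k)>0$, hence $\lambda_0(k)\ge\lambda_0+\delta(\varepsilon)$, and this part is $O\bigl(e^{-t(\lambda_0+\delta(\varepsilon))}\bigr)$, negligible against the polynomial prefactor $t^{-d/\alpha}$. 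On $|k|<\varepsilon$ I replace $e^{-tH_k}$ by its ground-state part, substitute $k=(\beta t)^{-1/\alpha}\kappa$, and pass to the limit by dominated convergence, using $\psi_0^{(k)}\to\psi_0$, $(\psi_0^{(k)},\widehat u_0(\cdot,k))_{\mathcal H}\to(\psi_0,\widehat u_0(\cdot,0))_{\mathcal H}$ and $t\,\eta(k)\to|\kappa|^\alpha$. Recognizing
\[
\frac{1}{(2\pi)^d}\int_{\R^d}e^{ik\cdot x}e^{-\beta t|k|^\alpha}\,dk=(\beta t)^{-d/\alpha}\,p_\alpha\!\Bigl(\frac{x}{(\beta t)^{1/\alpha}}\Bigr)
\]
as the symmetric $\alpha$-stable density at scale $(\beta t)^{1/\alpha}$ then produces \eqref{asymptPP}, with $\tilde c_0=\beta$ and $C_d(\alpha)$ collecting the remaining normalizing constants.

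The main obstacle is making the interchange of the $t\to\infty$ limit with the Fourier inversion rigorous, that is, controlling $e^{-tH_k}$ simultaneously for all $k$: one must show that $R_k(t)$ and the errors in the expansion of $\lambda_0(k)$ and $\psi_0^{(k)}$ are small \emph{uniformly} over the scaling window $|k|\lesssim t^{-1/\alpha}$, and that the large-$k$ region is genuinely exponentially subdominant (this is where a non-degeneracy assumption $\widehat a(k)<1$ for $k\ne0$ enters). The borderline case $\alpha=2$ is delicate, since the tail $|z|^{-d-2}$ yields $\eta(k)\sim\tilde c_0|k|^2\log|k|^{-1}$; the Gaussian limit then holds with $(\beta t)^{1/2}$ corrected by a logarithmic factor, so $\alpha=2$ should be treated separately or under a slightly stronger moment assumption.
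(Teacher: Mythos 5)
Your proposal follows essentially the same route as the paper's proof: Fourier diagonalization in $x$ reducing \eqref{PP} to the family of one-dimensional Schr\"odinger operators $H_k = H + b(s)(1-\widehat a(k))$, analytic perturbation theory giving $\lambda_k = \lambda_0 + (1-\widehat a(k))(b\psi_0,\psi_0) + o(|k|^\alpha)$ with the gap persisting for small $|k|$, and a split of the Fourier inversion integral with the substitution $q = k(\tilde c_0 t)^{1/\alpha}$ producing the symmetric $\alpha$-stable density, while the region $|k|\ge\delta$ is exponentially subdominant via $\widehat a(k)<1$ for $k\ne 0$. Your closing caveat about the borderline case $\alpha=2$, where the tail $|z|^{-d-2}$ yields $1-\widehat a(k)\sim \tilde c_0|k|^2\log|k|^{-1}$ rather than a pure power, flags a genuine subtlety that the paper's own argument does not address.
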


Note that for smaller $0< \alpha \leq 2$ the pre-exponential term is
decreasing faster.

\begin{remark}
In the case when $|x| \ll t^{1/\alpha}$, the asymptotic of
(\ref{asymptPP}) can be written as follows:
\begin{equation}\label{asymptG}
\psi_t (s,x) =  C_d (\alpha) \ V(\alpha) \frac{e^{- t \lambda_0
}}{t^{d/\alpha}} \psi_0 (s) (\psi_0 (s), \widehat u_0 (s,0))(1 + o(1)),
\end{equation}
where $V(\alpha)\ = \ \frac{1}{(2 \pi)^d}  \int_{\R^d} e^{-
|q|^\alpha} d q $.
\end{remark}
\begin{proof}
Since the operator
$$
(B \psi) (s,x) = b(s) \int_{\R^d} a(x-y)(\psi(s,y)- \psi(s,x))
dy
$$
is bounded and  self-adjoint in $L^2(\R_+, \R^d)$ as well as the
operator $H$ is essentially self-adjoint in $L^2(\R_+)$, we conclude
that the operator ${\cal L}$ is self-adjoint and bounded from above
in $L^2(\R_+, \R^d)$. Consequently, the operator ${\cal L}$
generates a $C_0$-semigroup in $L^2(\R_+, \R^d)$: $\; \psi_t(s,x) = e^{t {\cal L}} \psi_0(s,x)$.

We will construct now the spectral representation of $\cal L$ as the
direct integral of 1-D Schr\"odinger operators $H_k, \ k \in \widehat
\R^d$ using the Fourier transform over $x \in \R^d$. Consider
$\psi(s,x) \in L^2(\R_+, \R^d)$ and present it using the Fourier
duality in the form
$$
\psi(s,x) = \frac{1}{(2 \pi)^d} \int_{\widehat \R^d} e^{-i (k,x)}
\widehat \psi (s,k) dk
$$
Then the operator $\widehat{\cal L}$ in $L^2 (\R_+ \times \widehat \R^d)$
has the following form
$$
(\widehat{\cal L} \widehat\psi)(s,k) = \frac{\partial^2 \widehat
\psi(s,k)}{\partial s^2} - v(s) \widehat \psi(s,k) - b(s)(1 - \widehat a(k)) \widehat
\psi(s,k), \quad \quad \widehat\psi(0,k)=0.
$$
Here
$$
\widehat a(k) = \int_{\R^d} e^{i(k,z)} a(z) dz,
$$
and $\widehat a(0) = \int_{\R^d} a(z) dz =1$ by the normalization condition.

Let us introduce for each $k \in \widehat \R^d$ the Schr\"odinger
operator $H_k:=H + B_k$  where the operators $H$ and $B_k$ act in $L^2(\R_+)$ as follow
\begin{gather*}
H \psi (s) := -\frac{d^2 \psi}{d s^2} + v(s) \psi(s), \\ B_k \psi(s):=  \ b(s)(1-\widehat
a(k)) \psi (s), \\
\psi(0)=0,  \quad \psi\in C_0(\R_+).
\end{gather*}
One can rewrite
$$
H_k  \psi(s) = - \frac{d^2\psi(s) }{d s^2} + v_k(s)
\psi(s),
$$
where
$$
v_k(s) := v(s) + b(s) (1 -
\widehat a(k)) = d(s) - b(s)\widehat a(k).
$$
It is worth noting that
\begin{equation}\label{A1}
\widehat a(k) < \widehat a(0)= \int_{\R^d} a(z) dz = 1, \  k
\in\widehat{\R}^{d}\setminus\{0\}, \quad \widehat a(k) \to 0, \; |k| \to
\infty,
\end{equation}
and our assumption
$$
a(z)\sim\frac{c_{0}}{|z|^{d+\alpha}}, \quad 0<\alpha\leq  2, \quad
\mbox{as} \quad z \to \infty
$$
implies that
$$
\widehat a(k) =  \int_{\R^d} e^{i(k,z)} a(z) dz = 1 - c_0 |k|^\alpha +
o(|k|^\alpha), \quad c_0>0, \quad  0< \alpha \le 2, \quad |k| \to 0.
$$

Each of the operators $H_k$ is essentially self-adjoint operator in
$L^2 (\R_+)$, and for small enough $k, \; |k| \le \delta$, has a
simple ground state $\psi_k(s) > 0$ analytically depending on the
perturbation operator $B_k$ which is a bounded operator in
$L^2(\R_+)$. The corresponding eigenvalue $\lambda_k$ is strictly
greater than $\lambda_0$, since $b(s)(1 - \widehat a(k)) \ge 0$. In this
situation one can use the standard Schr\"odinger perturbation
theory, see e.\,g. \cite{RS}, for the simple eigenvalue of the
perturbed operator $ H_k = H  + B_k$. Then in the case when $k$
is small enough: $|k| \le \delta$, the lowest eigenvalue $\lambda_k
$ of $H_k$ and the corresponding eigenfunction $\psi_k (s)$ have the
following representations:
\begin{equation}\label{C1}
\lambda_k = \lambda_0 + (B_k \psi_0, \psi_0) + O(||B_{k}||^{2})
\ = \ \lambda_0 + (1 - \widehat a(k))(b \psi_0, \psi_0 ) +
o(|k|^{\alpha}),
\end{equation}
and
\begin{equation}\label{C2}
\psi_k (s) = \psi_0 (s) + (H - \lambda_0)^{-1} ((B_k \psi_0,
\psi_0) \psi_0 - B_k \psi_0 ) + o(|k|^{\alpha}),
\end{equation}
where $\psi_0$ is the normalized eigenfunction of the operator $H$:
$(\psi_0, \psi_0) =1$, and the operator $(H - \lambda_0)^{-1}$ is
bounded in the invariant subspace $\psi_0^{\bot} = L^2(\R_+)\ominus
\{ \psi_0 \} $.
\\

Consider the parabolic problem associated with operator $-H_k$ in
$L^2(\R_+)$:
\begin{align*}
\frac{\partial \widehat\psi}{\partial t}  &=  - H_k \widehat \psi  =
\frac{\partial^2 \widehat \psi}{\partial s^2} - (v(s)  + b(s) (1 - \widehat
a(k))) \widehat \psi, \\
\widehat \psi &= \widehat \psi_t (s,k), \; s\in
\R_+,\; k \in \widehat\R^d,
\\
\widehat \psi_0 (s,k) &= \widehat u_0 (s,k) \in \ L^2(\R_+), \; k \in \widehat\R^d.
\end{align*}
Using the spectral decomposition for $\widehat \psi_t (s, k)$:
$$
\widehat \psi_t (s, k) = e^{-t H_k} \widehat u_0 (s,k) = \int_{\sigma (H_k)} e^{-t w} d E_{H_k} (w) \widehat u_0 (s,k)
$$
we get:
\begin{enumerate}
\item from the continuity arguments and condition (\ref{A1}) that for
any $\varepsilon > 0$ one can find $\delta = \delta (\varepsilon)>0$
such that when $ |k| \ge \delta$:
$$ \| \widehat \psi_t (s, k)
\|_{L^2(\R_+)}  \ \le \ e^{- t (\lambda_0 + \varepsilon)}  \| \widehat
u_0 (s,k) \|_{L^2(\R_+)},
$$
\item if $|k| \le \delta$, then
$$
\widehat \psi_t (s, k) = e^{-t \lambda_k} \psi_k(s) (\psi_k(s), \widehat
u_0 (s,k)) \ + \ \widehat \phi_t (s,k),
$$
where
$$ \| \widehat\phi_t (s, k) \|_{L^2(\R_+)}  \ \le \ e^{- t (\lambda_0 + \varepsilon)}  \| \widehat
u_0 (s,k) \|_{L^2(\R_+)}.
$$
\end{enumerate}

%Therefore,
%$$
%\int_{\R_+ \times \R^d} | \widehat \psi(t, (s,k))|^2 ds dk = %\int_{\R_+ \times \{ |k| \le \delta \} } | \widehat \psi(t, (s,k))|^2 ds
%dk \ + \ O \left( e^{-2(\lambda_0 +\varepsilon) t} \right).
%$$

%If $|k| \le \delta$, then the distance of the rest of the spectrum
%of $H_k$ from the ground state energy $\lambda_k$ is greater than
%some  $ \varepsilon > 0$, i.e.
%$$ \| \widehat \psi (t, (s,k)) - e^{- \lambda_k t} \psi_k (s) (\psi_k, \widehat
%u_0) \| \  \le \ C e^{-(\lambda_0 + \varepsilon) t}.
%$$

This implies
\begin{align}
&\quad\psi_t (s, x)  =   \frac{1}{(2 \pi)^d} \int_{\widehat \R^d} e^{-i
(k,x)}\widehat \psi_t (s,k) dk  \notag\\&=
\frac{1}{(2 \pi)^d} \int_{\{ |k| \le \delta \}} e^{-i (k,x)}
e^{-\lambda_k t} \psi_k (s) (\psi_k (s), \widehat u_0 (s,k)) dk \notag
\\&\quad+
\frac{1}{(2 \pi)^d} \int_{\{ |k| \le \delta \}} e^{-i (k,x)} \widehat
\phi_t (s,k) dk + \ \frac{1}{(2 \pi)^d} \int_{\{ |k| \ge \delta
\}} e^{-i (k,x)} \widehat \psi_t (s,k) dk \notag\\&=\label{D}
\frac{1}{(2 \pi)^d} \int_{\{ |k| \le \delta \}} e^{-i (k,x) -
\lambda_k t} \psi_k (s) (\psi_k (s), \widehat u_0 (s,k)) dk  +
\Psi_{\varepsilon}(t,s,x),
\end{align}
with
$$ \| \Psi_\varepsilon (t, s, x) \|_{L^2(\R_+, \R^d)}  \ \le \ e^{- t (\lambda_0 + \varepsilon)}  \|
u_0 (s,x) \|_{L^2(\R_+, \R^d)}.
$$

Finally we will find the asymptotic of the integral in (\ref{D}).
Using decompositions (\ref{C1})--(\ref{C2}) and after the change of
variables $ q = k (t \tilde c_0)^{1/\alpha}$ the integral in
(\ref{D}) can be written as follows
\begin{align}
&\frac{e^{-\lambda_0 t}}{(2 \pi)^d} \int_{\{ |k| \le \delta \}} e^{-i
(k,x)} e^{- \tilde c_0 |k|^\alpha t} \psi_0 (s) (\psi_0 (s), \widehat
u_0 (s,0)) dk \ \left( 1+ O \left(\frac{1}{t^{{1}/{\alpha}}}
\right) \right) \notag
\\[2mm]=
&\frac{e^{-\lambda_0 t}}{t^{d/\alpha}} \frac{C_d (\alpha)}{(2 \pi)^d}
\  \psi_0 (s) (\psi_0 (s), \widehat u_0 (s,0))\notag\\&\qquad\qquad\qquad\times \int_{\R^d} e^{-i \left(
q,\frac{x}{ (\tilde c_0 t)^{{1}/{\alpha}}} \right) - |q|^\alpha}
d q \ \left( 1+ O \left(\frac{1}{t^{{1}/{\alpha}}} \right)
\right),\label{Dasymp}
\end{align}
where the integral in (\ref{Dasymp})
$$
\frac{1}{(2 \pi)^d}  \int_{\R^d} e^{-i \left( q,\frac{x}{ (\tilde
c_0 t)^{{1}/{\alpha}}} \right) - |q|^\alpha} d q = p_\alpha
\left(  \frac{x}{ (\tilde c_0 t)^{{1}/{\alpha}}} \right)
$$
represents a density of a d-dimensional symmetric $\alpha$-stable
distribution, see e.\,g. \cite{Fel}, and it is an integer
function of $\dfrac{x}{t^{{1}/{\alpha}}}$.

Note that $\psi_0(s)
> 0, \ s \in (0, \infty)$ and $\widehat u_0 (s,k) > 0$ for small enough
$k$ (due to positivity of $u_0 (s,x))$.

Theorem is proved.
\end{proof}


\begin{thebibliography}{1}

\bibitem{Baake} Baake E. and Gabriel W., Biological evolution through
mutation, selection and drift: an introduction review, {\em Ann.
Rev. Comp. Phys.} \textsc{VII}, 203--264 (2000).

\bibitem{BSh} Berezin F.A. and Shubin M.A., {\em The Schr\"odinger equation},
(Moscow State University Publ., 1983) (Russian), (Kluwer, 1991)
(English).

\bibitem{Burger} Burger R., {\em The mathematical theory of selection,
recombination and mutation}, (NY: Wiley, 2000).

\bibitem{CK} Crow J.F. and Kimura M., The theory of genetic loads, {\em
In:} Proc. XI Int. Congr. Genetics, vol.2, 495--505, (Oxford:
Pergamon Press, 1964).

\bibitem{EN}
Engel K.-J. and Nagel R., {\em One-parameter semigroups for linear
  evolution equations\/}, vol. 194 of {\em Graduate Texts in
  Mathematics\/}, (Springer-Verlag, 2000).

\bibitem{Fel}
Feller W., {\em An introduction to probability theory and its applications}, vol. II, (John Wiley~\& Sons, Inc., New York--London--Sydney, 1966).

\bibitem{Ka} Kallenberg O., {\em Random Measures}, 4th edition,
(Academic Press; Akademie-Verlag, Berlin 1986).

\bibitem{K} Kimura M., A stochastic model concerning the maintenance
of genetic variability in quantitative characters, {\em Proc. Natl.
Acad. Sci. USA} \textbf{54}, 731--736 (1965).

\bibitem{KKP}
Kondratiev Y., Kutoviy O., and Pirogov S., Correlation functions and
invariant measures in continuous contact model.
\newblock {\em Infin. Dimens. Anal. Quantum Probab. Relat. Top.} \textbf{11}(2),
  231--258 (2008).

  \bibitem{RS}
Reed~M. and Simon~B.,
\newblock {\em Methods of Modern Mathematical Physics}, Vol. 4:
 Analysis of Operators.
\newblock Academic Press, 1978.


\bibitem{RSN}
Riesz F. and Sz\"{o}kefalvi-Nagy B., {\em Functional Analysis}, (NY:
Dover 1990).
\end{thebibliography}
\end{document}